\newtheorem{assumption}{Assumption}
\renewcommand{\@Opargbegintheorem}[4]{%
  #4\trivlist\item[\hskip\labelsep{#3#2\@thmcounterend}]}
\begin{document}
\title{Asymptotic utility of spectral anonymization \thanks{Submitted and accepted for publication in the proceedings of Privacy in Statistical Databases (PSD) 2024, held in Antibes, France, from September 25–27, 2024.}}
%
%
\author{Katariina Perkonoja\inst{1,2}\orcidID{0000-0002-3812-0871} \and
Joni Virta \inst{1}\orcidID{0000-0002-2150-2769}}
\authorrunning{K. Perkonoja and J. Virta}
%
\institute{Department of Mathematics and Statistics, University of Turku, Finland \and
Department of Computing, University of Turku, Finland \\ \email{\{kakype,jomivi\}@utu.fi}}
\maketitle              
\begin{abstract}

In the contemporary data landscape characterized by multi-source data collection and third-party sharing, ensuring individual privacy stands as a critical concern. While various anonymization methods exist, their utility preservation and privacy guarantees remain challenging to quantify. In this work, we address this gap by studying the utility and privacy of the spectral anonymization (SA) algorithm, particularly in an asymptotic framework. Unlike conventional anonymization methods that directly modify the original data, SA operates by perturbing the data in a spectral basis and subsequently reverting them to their original basis. Alongside the original version $\mathcal{P}$-SA, employing random permutation transformation, we introduce two novel SA variants: $\mathcal{J}$-spectral anonymization and $\mathcal{O}$-spectral anonymization, which employ sign-change and orthogonal matrix transformations, respectively. We show how well, under some practical assumptions, these SA algorithms preserve the first and second moments of the original data. Our results reveal, in particular, that the asymptotic efficiency of all three SA algorithms in covariance estimation is exactly 50\% when compared to the original data. To assess the applicability of these asymptotic results in practice, we conduct a simulation study with finite data and also evaluate the privacy protection offered by these algorithms using distance-based record linkage. Our research reveals that while no method exhibits clear superiority in finite-sample utility, $\mathcal{O}$-SA distinguishes itself for its exceptional privacy preservation, never producing identical records, albeit with increased computational complexity. Conversely, $\mathcal{P}$-SA emerges as a computationally efficient alternative, demonstrating unmatched efficiency in mean estimation.

\keywords{Anonymization \and Limiting distribution \and Privacy protection \and Singular value decomposition \and Utility}
\end{abstract}

\section{Introduction}\label{sec:intro}

Various anonymization techniques and privacy protocols, such as $k$-anonymity \cite{sweeney2002kano}, $l$-diversity \cite{machanavajjhala2007ldi}, $t$-closeness \cite{ninghui2007tcl}, and differential privacy \cite{dwork2006diffp}, have been proposed to protect data privacy. While these tools are conceptually simple, their deeper mathematical properties are often intractable, meaning that the related anonymization methods generally lack proven theoretical guarantees, particularly regarding utility preservation, although some attempts have been made, see, e.g., \cite{dunsche2022multivariate,xiao2019local,awan2019benefits,seeman2021exact}. As a result, decision-makers encounter uncertainty when determining acceptable privacy thresholds for data disclosure, compounded by the trade-off between privacy and utility. Therefore, there exists a pressing need for theoretical results addressing the utility preservation of anonymization techniques, alongside their privacy guarantees.

In this work we study the utility and privacy of the spectral anonymization (SA) algorithm of Lasko and Vinterbo \cite{lasko2009spectral} in the asymptotic framework where the sample size of $n$ of the observed data grows without bounds. Given an observed data matrix $X_n \in \mathbb{R}^{n \times p}$ with $n$ observations of $p$ variables, SA anonymizes it as follows: Letting $H_n := I_n - (1/n) 1_n 1_n'$ be the $n \times n$ centering matrix, we first find the singular value decompostion (SVD) of the centered data $H_n X_n = U_n D_n V_n'$, where the matrix of left singular vectors is $U_n = (u_{n1} \mid \cdots \mid u_{np}) \in \mathbb{R}^{n \times p}$. Note that the subscript $n$ in all quantities serves to remind that our viewpoint is asymptotic (i.e., $n \rightarrow \infty$) and that we are in fact dealing with sequences of data matrices ($X_1, X_2, \ldots $). Let then $P_{n1}, \ldots, P_{np} \in \mathbb{R}^{n \times n}$ be i.i.d. random permutation matrices sampled uniformly from the set $\mathcal{P}_n$ of all $n \times n$ permutation matrices. Denoting by $U_{n0} := (P_{n1} u_{n1} \mid \cdots \mid P_{np} u_{np})$ the matrix of permuted singular vectors, the spectral anonymization of $X_n$ is
\begin{align}\label{eq:p_sa}
    X_{n, \mathcal{P}} = U_{n0} D_n V_n' + 1_n \Bar{x}_n',
\end{align}
where $\Bar{x}_n := (1/n) X_n' 1_n$. Note that each column of $U_n$ is permuted independently, meaning that $U_{n0}$ is not, in general, a simple row permutation of $U_n$. Intuitively, SA uses a very simple form or perturbation (permuting observations), which by itself would not guarantee anonymity, but since it is applied in another basis (given by the right singular vectors $V_n$), anonymity in the \textit{original basis} is reached. Technically, any basis could be used, but the \textit{spectral basis} given by the SVD is particularly appealing as the matrix $U_{n}$ has uncorrelated columns, implying that, after returning to the original basis, the anonymized data has approximately the same first and second moments as the original data. As argued also by \cite{lasko2009spectral}, since many standard statistical methods, such as regression, discriminant analysis, support vector machines, $k$-means clustering, etc., are based on means and correlations, this preservation of resemblance (moments) can be seen, more or less, as equivalent to the preservation of general data utility.

While permuting is possibly the most direct form of perturbation one can apply in the spectral basis, it is not the only option, and as our first main contribution, we propose two novel forms of SA, called $\mathcal{J}$-spectral anonymization and $\mathcal{O}$-spectral anonymization. The original SA defined in \eqref{eq:p_sa} will in the sequel be denoted as $\mathcal{P}$-SA, to distinguish it from these variants. In $\mathcal{J}$-SA, the transformation matrices used to perturb $U_n$ are drawn uniformly from the set $\mathcal{J}_n$ of all $n \times n$ sign-change matrices (diagonal matrices with diagonal elements $\pm 1$) and in $\mathcal{O}$-SA these matrices are drawn from the Haar uniform distribution on the set $\mathcal{O}_n$ of all $n \times n$ orthogonal matrices \cite{stewart1980efficient}. The resulting spectral anonymizations of $X_n$ are denoted by $X_{n, \mathcal{J}}$ and $X_{n, \mathcal{O}}$, respectively. Figure \ref{fig:demo} illustrates the effect of the three perturbations in a simple bivariate scenario.

\begin{figure}[t]
\includegraphics[width=\textwidth]{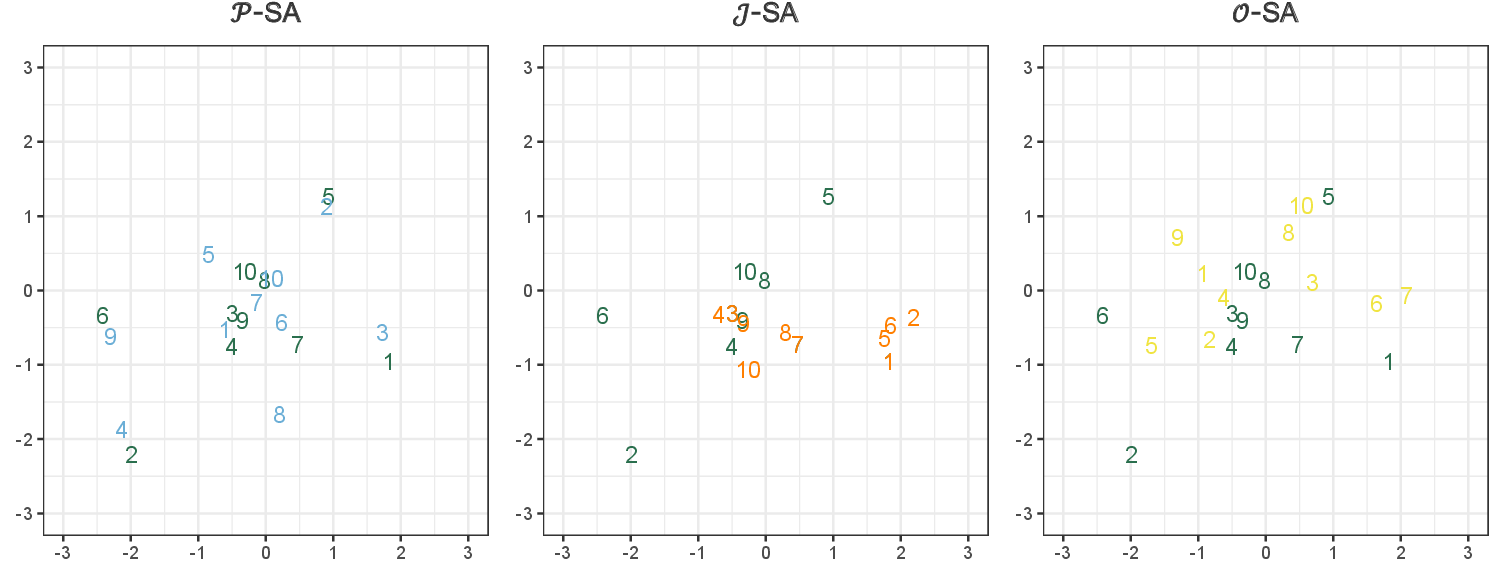}
\caption{Illustration of the effects of $\mathcal{P}$-SA (blue), $\mathcal{J}$-SA (orange) and $\mathcal{O}$-SA (yellow) to the original data (green/darkest), with numerical labels corresponding to the row indices of original data, in a scenario with $n = 10$ and $p = 2$. $\mathcal{P}$-SA and $\mathcal{J}$-SA occasionally result in unwanted overlap, wherein a row in the anonymized data matches another row in the original dataset. In the case of $\mathcal{J}$-SA, this occurrence is due to coincidental alignment of signs with those in the original singular vectors, causing the observation to be duplicated (overlap of same indices), whereas for $\mathcal{P}$-SA, random permutation of values may cause another row to align with one in the original dataset (overlap of different indices). Conversely, $\mathcal{O}$-SA induces arbitrary rotations in the spectral space that prevent exact matches altogether with probability 1.}
\label{fig:demo}
\end{figure}

As our second main contribution, we investigate the earlier observation, stated also by Lasko and Vinterbo \cite{lasko2009spectral}, that $\mathcal{P}$-SA approximately preserves the means, variances, covariances, and linear correlations of the original data $X_n$. In this work, we quantify this statement in an asymptotic framework. {While SA itself makes no distributional assumptions and is applicable to any continuous data, we derive our results, for simplicity, under the assumption that the rows of $X_n$ are sampled i.i.d. from the normal distribution $\mathcal{N}_p(\mu, \Sigma)$ for some fixed $\mu \in \mathbb{R}^p$ and positive definite covariance matrix $\Sigma$. Extensions to other distributional assumptions are investigated through simulations in Section \ref{sec:simu} and discussed further in Section \ref{sec:conclusion}. As the multivariate normal distribution is entirely determined by its means and covariances, evaluating the resemblance of anonymized data with respect to these moments is equivalent to assessing general data utility in the present scenario.

As our third main contribution, we conduct a simulation study that investigates how well the three forms of SA preserve utility and privacy under a wide range of data scenarios and sample sizes. We are not aware of an equally extensive benchmarking study having been carried out for the original SA of \cite{lasko2009spectral}, although \cite{kundu2019privacy} compared the prediction utility of SA to that of a private Bayesian factor model. Also, closely related to this, \cite{calvino2017fa} conducted a simulation study assessing utility loss when anonymizing data using another form of spectral anonymization based on factor analysis and principal component analysis.

In Section \ref{sec:asymp} we present our results on the asymptotic utility of the three forms of SA. The corresponding finite-sample behavior is studied using simulations in Section \ref{sec:finite}, whereas finite-sample privacy preservation is investigated in Section \ref{sec:priv}. In Section \ref{sec:conclusion} we conclude with discussion. The proofs are presented in Appendix \ref{sec:appen_proofs} and Appendix \ref{sec:appen_fig} contains additional simulation figures.

\section{Asymptotic utility} \label{sec:asymp}

We divide our theoretical results in two parts, investigating first how well the SA-methods preserve the first moment (mean vector) and then doing the same for the second moment (covariance matrix). Recall from Section \ref{sec:intro} that we assume the rows of $X_n$ to be i.i.d. from $\mathcal{N}_p(\mu, \Sigma)$. Throughout this section we impose the following technical condition on the covariance matrix $\Sigma$, which guarantees that the singular spaces of $H_n X_n$ are asymptotically identifiable up to sign. This assumption is mild and practically always satisfied for continuous real data.

\begin{assumption}\label{assu:eigenvalues}
    The eigenvalues of $\Sigma$ are distinct.
\end{assumption}

The classical central limit theorem gives for $\Bar{x}_n$ (the mean of the original data) the following limiting distribution as $n \rightarrow \infty$:
\begin{align*}
    \sqrt{n} (\Bar{x}_n - \mu) \rightsquigarrow \mathcal{N}_p(0, \Sigma),
\end{align*}
where $\rightsquigarrow$ denotes convergence in distribution. As our first result in this section, we derive the limiting distributions of the sample means $\Bar{x}_{n, \mathcal{P}}, \Bar{x}_{n, \mathcal{J}}, \Bar{x}_{n, \mathcal{O}}$ of the different spectral anonymzations of $X_n$.

\begin{theorem}\label{theo:means}
    Under Assumption \ref{assu:eigenvalues}, we have the following, as $n \rightarrow \infty$.
    \begin{align*}
        \sqrt{n} ( \Bar{x}_{n, \mathcal{P}} - \mu ) = \sqrt{n} (\Bar{x}_n - \mu) &\rightsquigarrow \mathcal{N}_p(0, \Sigma) \\
        \sqrt{n} ( \Bar{x}_{n, \mathcal{J}} - \mu ) &\rightsquigarrow \mathcal{N}_p(0, 2 \Sigma) \\
        \sqrt{n} ( \Bar{x}_{n, \mathcal{O}} - \mu ) &\rightsquigarrow \mathcal{N}_p(0, 2 \Sigma)
    \end{align*}
\end{theorem}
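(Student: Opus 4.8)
The plan is to reduce all three statements to a single algebraic identity and then to separate carefully the randomness coming from the data $X_n$ from the randomness coming from the transformation matrices. A direct computation from \eqref{eq:p_sa} (and from its $\mathcal{J}$- and $\mathcal{O}$-analogues) gives, for $\mathcal{T}\in\{\mathcal{P},\mathcal{J},\mathcal{O}\}$, that $\Bar{x}_{n,\mathcal{T}} = \Bar{x}_n + \tfrac{1}{n} V_n D_n U_{n0}' 1_n$, hence
\[
\sqrt{n}(\Bar{x}_{n,\mathcal{T}} - \mu) = \sqrt{n}(\Bar{x}_n - \mu) + V_n\Big(\tfrac{1}{\sqrt{n}} D_n\Big)\big(U_{n0}' 1_n\big).
\]
Since $1_n' H_n = 0$, the columns of $U_n$ are orthogonal to $1_n$, i.e.\ $U_n' 1_n = 0$. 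In the $\mathcal{P}$-case every $P_{nj}' 1_n = 1_n$, so $U_{n0}' 1_n = U_n' 1_n = 0$ and $\Bar{x}_{n,\mathcal{P}} = \Bar{x}_n$ \emph{exactly}; the first line then follows from the classical central limit theorem quoted above. For the other two variants the correction term is genuinely random, and the task is to identify the joint limit of $\big(\sqrt{n}(\Bar{x}_n - \mu),\, \tfrac{1}{\sqrt{n}}D_n,\, V_n,\, U_{n0}' 1_n\big)$.

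Two facts about the spectral decomposition are needed. First, \emph{consistency}: writing $\Sigma = V\Lambda V'$ for the spectral decomposition of $\Sigma$ with $\Lambda = \mathrm{diag}(\lambda_1,\dots,\lambda_p)$ in decreasing order, one has $\tfrac1n X_n' H_n X_n \to \Sigma$, whence $\tfrac1n D_n^2 \to \Lambda$ and, by continuity of the eigenprojections at a matrix with distinct eigenvalues (this is where Assumption~\ref{assu:eigenvalues} enters), $V_n \to V$ up to column signs; the sign indeterminacy is harmless because $X_{n,\mathcal{T}}$, and hence $\Bar{x}_{n,\mathcal{T}}$, is invariant under simultaneous sign flips of matching columns of $U_n$ and $V_n$, so the signs may be fixed so that $V_n \to V$. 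Second, \emph{delocalization}: $\max_i|u_{nji}| \to 0$ almost surely for each $j$. For this I would use that, by rotational invariance of the Gaussian law together with the polar decomposition of a Gaussian matrix, $U_n$ is (up to column signs) Haar-distributed on the Stiefel manifold of orthonormal $p$-frames in $\{1_n\}^\perp$ and independent of $(D_n,V_n)$; each column $u_{nj}$ is then a uniform unit vector in $\{1_n\}^\perp$, and delocalization follows from the representation $u_{nj}\stackrel{d}{=} H_n\gamma/\|H_n\gamma\|$ with $\gamma\sim\mathcal{N}_n(0,I_n)$.

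With these in hand the $\mathcal{O}$-case is the cleanest. Conditionally on $U_n$, each entry $u_{nj}' O_{nj}' 1_n$ is distributed as $\sqrt{n}\,u_{nj}'\omega_j$ with $\omega_j$ uniform on the unit sphere, and since $u_{nj}$ is a unit vector this conditional law does not depend on $U_n$; thus $U_{n0}' 1_n$ is \emph{exactly} independent of $X_n$ and converges in distribution to $\mathcal{N}_p(0,I_p)$. Slutsky's theorem and the continuous mapping theorem applied to the displayed identity then yield $\sqrt{n}(\Bar{x}_{n,\mathcal{O}} - \mu)\rightsquigarrow W + V\Lambda^{1/2}Z$, where $W\sim\mathcal{N}_p(0,\Sigma)$ is independent of $Z\sim\mathcal{N}_p(0,I_p)$, and this law is $\mathcal{N}_p(0,2\Sigma)$. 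For the $\mathcal{J}$-case the conditional law of $U_{n0}' 1_n$ given $U_n$ \emph{does} depend on $U_n$, so I would argue via characteristic functions: conditioning on $X_n$,
\[
\mathbb{E}\big[e^{\,\mathrm{i} s'\sqrt{n}(\Bar{x}_n - \mu) + \mathrm{i} t' U_{n0}' 1_n}\big] = \mathbb{E}\Big[e^{\,\mathrm{i} s'\sqrt{n}(\Bar{x}_n - \mu)} \prod_{j=1}^{p}\prod_{l=1}^{n}\cos(t_j u_{njl})\Big],
\]
and the double product tends to $e^{-\|t\|^2/2}$ almost surely via $\log\cos(x) = -x^2/2 + O(x^4)$, $\sum_l u_{njl}^2 = 1$, and delocalization; bounded convergence together with the CLT for $\Bar{x}_n$ then gives joint convergence of $(\sqrt{n}(\Bar{x}_n-\mu),\,U_{n0}'1_n)$ to independent $\mathcal{N}_p(0,\Sigma)$ and $\mathcal{N}_p(0,I_p)$, and one finishes exactly as in the $\mathcal{O}$-case.

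The crux is the second spectral fact together with this separation of the two sources of randomness: one must show that the correction term $V_n(\tfrac1{\sqrt n}D_n)U_{n0}'1_n$ is asymptotically $\mathcal{N}_p(0,\Sigma)$ and asymptotically independent of $\sqrt{n}(\Bar{x}_n-\mu)$, which requires both the delocalization of the singular vectors and a conditional limit theorem strong enough to survive integration against the data-dependent factor $e^{\mathrm{i}s'\sqrt{n}(\Bar{x}_n-\mu)}$. The ever-present sign indeterminacy of the SVD is a minor nuisance, handled once and for all by the invariance remark above.
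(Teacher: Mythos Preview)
Your proof is correct and takes a genuinely different route from the paper's.

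The paper first reduces, by affine equivariance, to the canonical case $\mu=0$, $\Sigma=\Lambda$ diagonal. It then uses the identity $D_nU_n'=V_n'X_n'H_n$ to rewrite the $k$th coordinate of $D_nU_{n0}'1_n$ as $e_k'V_n'X_n'H_nT_{nk}1_n$ (with $T_{nk}$ the perturbation matrix) and, after invoking $V_n\rightarrow_p I_p$, obtains for $\mathcal{J}$-SA the expansion $(\sqrt{n}\,\bar y_n)_k=n^{-1/2}\sum_{i=1}^n x_{n,ik}(1+s_{n,i})+o_p(1)$, with an analogous expansion for $\mathcal{O}$-SA via a $\chi$-representation of the uniform sphere law. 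The result then follows from the ordinary multivariate CLT applied to these i.i.d.\ sums.

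You instead keep $\Sigma$ general, use the consistency $V_n\rightarrow_p V$, $n^{-1/2}D_n\rightarrow_p\Lambda^{1/2}$, and analyse $U_{n0}'1_n$ directly. For $\mathcal{O}$-SA you exploit a clean exact-independence observation: since each $u_{nj}$ is a unit vector and $O_{nj}'1_n/\sqrt{n}$ is Haar on the sphere, the conditional law of $u_{nj}'O_{nj}'1_n$ given $X_n$ does not depend on $X_n$, so $U_{n0}'1_n$ is \emph{exactly} independent of the data. For $\mathcal{J}$-SA you go through the conditional characteristic function together with delocalization of the left singular vectors. In both cases you finish via Slutsky's lemma and the continuous mapping theorem.

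What each approach buys: the paper's substitution $D_nU_n'=V_n'X_n'H_n$ is more elementary in that it avoids any distributional statement about $U_n$ (no Haar structure, no delocalization) and reduces everything to a single i.i.d.\ CLT, at the cost of the preliminary reduction to the diagonal case. Your approach is more structural: the exact-independence argument for $\mathcal{O}$-SA is tidier than the paper's $\chi$-trick, and your decomposition makes transparent why the extra $\Sigma$ in the limiting covariance appears (it is literally $V\Lambda^{1/2}\,\mathrm{Cov}(Z)\,\Lambda^{1/2}V'$ with $Z\sim\mathcal{N}_p(0,I_p)$ independent of the CLT limit of $\bar x_n$). The price is the delocalization input for $\mathcal{J}$-SA, which, while easy here thanks to the Gaussian rotational invariance you invoke, is a genuinely additional ingredient that the paper's argument does not require.
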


Two estimators of the same parameter can be compared based on the ``magnitudes'' of their limiting covariance matrices (``smaller'' covariance matrix implies more accurate estimator). Thus, Theorem \ref{theo:means} shows that $\mathcal{P}$-SA yields mean estimation efficiency equivalent to the original data (in fact, the two estimators are always exactly the same). Whereas $\mathcal{J}$-SA and $\mathcal{O}$-SA incur a cost in efficiency. Interestingly, this extra cost has a very simple form and the asymptotic efficiency of these two methods compared to mean estimation from the original data turn out to be exactly one half (due to the factor 2 in front of $\Sigma$).

We next conduct an analogous study of covariance matrix estimation. We denote the sample covariance matrices produced by the three forms of SA as $S_{n, \mathcal{P}}, S_{n, \mathcal{J}}, S_{n, \mathcal{O}}$. As a baseline, Theorem 1 in \cite{tyler1982radial}, shows that the limiting distribution of $S_n$, the sample covariance matrix of the original data, is 
\begin{align*}
        \sqrt{n} \mathrm{vec}( S_{n} - \Sigma ) \rightsquigarrow \mathcal{N}_{p^2}(0, (\Sigma^{1/2} \otimes \Sigma^{1/2}) (I_{p^2} + K_{p, p}) (\Sigma^{1/2} \otimes \Sigma^{1/2})' ),
\end{align*}
where $\mathrm{vec}(\cdot)$ denotes column-wise vectorization, $\otimes$ is the Kronecker product, $K_{p, p}$ is the $(p, p)$-commutation matrix \cite{kollo2005advanced} and the square root matrix $\Sigma^{1/2} := O \Lambda^{1/2}$ is computed using the eigendecomposition $\Sigma = O \Lambda O'$.


\begin{theorem}\label{theo:covariances}
    Under Assumption \ref{assu:eigenvalues}, we have for $\mathcal{A} \in \{ \mathcal{P}, \mathcal{J}, \mathcal{O} \}$ the following, as $n \rightarrow \infty$.
    \begin{align*}
        \sqrt{n} \mathrm{vec}( S_{n, \mathcal{A}} - \Sigma ) \rightsquigarrow \mathcal{N}_{p^2}(0, (\Sigma^{1/2} \otimes \Sigma^{1/2}) (2 I_{p^2} + 2 K_{p,p} - 2 V_p ) (\Sigma^{1/2} \otimes \Sigma^{1/2})'),
    \end{align*}
    where $V_p := \mathrm{diag} \{ \mathrm{vec}(I_p) \}$.
\end{theorem}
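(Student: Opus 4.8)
The plan is to express the anonymized sample covariance in the spectral basis, compare it to $S_n$ (whose limiting law is recalled above), and treat the resulting correction term by a conditional central limit theorem. Writing $A_{ni}$ for the random transformation matrix applied to the $i$-th singular vector, so that $U_{n0} = (A_{n1} u_{n1} \mid \cdots \mid A_{np} u_{np})$ with $A_{ni}' A_{ni} = I_n$, and using $H_n 1_n = 0$, I would first note that $H_n X_{n, \mathcal{A}} = H_n U_{n0} D_n V_n'$, so that, writing $L_n^{1/2} := D_n / \sqrt{n}$, $L_n := D_n^2 / n$ and $G_n := U_{n0}' H_n U_{n0}$, one has $S_{n, \mathcal{A}} = V_n L_n^{1/2} G_n L_n^{1/2} V_n'$ and $S_n = V_n L_n V_n'$, hence
\begin{align*}
 \sqrt{n}\,( S_{n, \mathcal{A}} - \Sigma ) = \sqrt{n}\,( S_n - \Sigma ) + V_n L_n^{1/2} W_n L_n^{1/2} V_n', \qquad W_n := \sqrt{n}\,( G_n - I_p ).
\end{align*}
By the law of large numbers and Assumption~\ref{assu:eigenvalues} (continuity of the eigendecomposition at a matrix with distinct eigenvalues), $L_n \to \Lambda$ and, since $V_n L_n^{1/2} W_n L_n^{1/2} V_n'$ is invariant under the sign indeterminacy of the SVD, the signs may be fixed so that $V_n \to O$, in probability, where $\Sigma = O \Lambda O'$. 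The problem then reduces to finding the limiting law of $W_n$ and showing it is independent of $\sqrt{n}\,(S_n - \Sigma)$.

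Next I would establish a central limit theorem for $W_n$ conditionally on $X_n$. From $H_n = I_n - (1/n) 1_n 1_n'$ and $A_{ni}' A_{ni} = I_n$, the $(i,j)$ entry of $G_n$ equals $u_{ni}' A_{ni}' A_{nj} u_{nj} - (1/n)(1_n' A_{ni} u_{ni})(1_n' A_{nj} u_{nj})$; since $1_n' u_{ni} = 0$ and $1_n' A_{ni} u_{ni} = O_p(1)$ in all three cases, the diagonal of $W_n$ is exactly $0$ for $\mathcal{P}$ and $O_p(1/\sqrt{n})$ for $\mathcal{J}$ and $\mathcal{O}$, while for $i \ne j$ one has $W_{n,ij} = \sqrt{n}\, u_{ni}' A_{ni}' A_{nj} u_{nj} + o_p(1)$. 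Conditionally on $X_n$, this leading term is a Hoeffding combinatorial statistic $\sqrt{n} \sum_k u_{ni,k} u_{nj,\pi(k)}$ with $\pi$ uniform on $\mathcal{P}_n$ (case $\mathcal{P}$), a Rademacher bilinear form $\sqrt{n} \sum_k \varepsilon_k u_{ni,k} u_{nj,k}$ (case $\mathcal{J}$), or $\sqrt{n}$ times the inner product of two independent uniform unit vectors of $\mathbb{R}^n$ (case $\mathcal{O}$). In each case the conditional mean is $0$, and the conditional variance is $n/(n-1)$ (case $\mathcal{P}$, using $1_n' u_{ni} = 0$ and $\|u_{ni}\|=1$), $n \sum_k u_{ni,k}^2 u_{nj,k}^2$ (case $\mathcal{J}$), or $1$ (case $\mathcal{O}$); moreover the conditional covariance between off-diagonal entries $W_{n,ij}$ and $W_{n,kl}$ with $\{i,j\} \ne \{k,l\}$ vanishes in every case (for $\mathcal{J}$, e.g., because a shared Rademacher variable enters with zero mean after the index bookkeeping, with analogous computations for $\mathcal{P}$ and $\mathcal{O}$). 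To pin down the limiting variance in case $\mathcal{J}$ I would exploit a rotational symmetry: writing $X_n = 1_n \mu' + Z_n \Sigma^{1/2}$, the law of $H_n Z_n$ is invariant under the orthogonal group of the hyperplane $1_n^\perp$, hence so is the law of $U_n$, so any fixed set of columns of $U_n$ is uniform on the corresponding Stiefel manifold of orthonormal frames in $1_n^\perp$; consequently $(\sqrt{n}\, u_{ni}, \sqrt{n}\, u_{nj})$ behaves to first order like a pair of independent centered i.i.d.\ $\mathcal{N}(0,1)$ vectors, which yields $n \sum_k u_{ni,k}^2 u_{nj,k}^2 \to 1$ by the law of large numbers and also the delocalization $\max_k |\sqrt{n}\, u_{ni,k}| = O_p(\sqrt{\log n})$ needed to check the Lindeberg condition. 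Assembling these facts gives, conditionally on $X_n$ and hence unconditionally, $W_n \rightsquigarrow W$, where $W$ is symmetric with $W_{ii} = 0$ and $\{ W_{ij} \}_{i < j}$ i.i.d.\ $\mathcal{N}(0,1)$ --- the same limit for all three SA variants.

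I would then assemble the limit. Since the transformation matrices are drawn independently of $X_n$ and the conditional limit $W$ does not depend on $X_n$, a standard conditioning argument shows that $\sqrt{n}\,(S_n - \Sigma)$ and $V_n L_n^{1/2} W_n L_n^{1/2} V_n'$ converge jointly to independent limits $T$ (the Tyler limit) and $O \Lambda^{1/2} W \Lambda^{1/2} O'$. Using $\mathrm{vec}( O \Lambda^{1/2} W \Lambda^{1/2} O' ) = (\Sigma^{1/2} \otimes \Sigma^{1/2})\, \mathrm{vec}(W)$ together with the elementary identity $\mathrm{Cov}(\mathrm{vec}(W)) = I_{p^2} + K_{p,p} - 2 V_p$ (unit off-diagonal variances and the symmetry $W_{ij} = W_{ji}$ contribute $I_{p^2} + K_{p,p}$, while the vanishing diagonal of $W$ removes the $V_p$ part twice), the covariance of $\mathrm{vec}( O \Lambda^{1/2} W \Lambda^{1/2} O' )$ is $(\Sigma^{1/2} \otimes \Sigma^{1/2})(I_{p^2} + K_{p,p} - 2 V_p)(\Sigma^{1/2} \otimes \Sigma^{1/2})'$; adding the Tyler covariance $(\Sigma^{1/2} \otimes \Sigma^{1/2})(I_{p^2} + K_{p,p})(\Sigma^{1/2} \otimes \Sigma^{1/2})'$ gives exactly $(\Sigma^{1/2} \otimes \Sigma^{1/2})(2 I_{p^2} + 2 K_{p,p} - 2 V_p)(\Sigma^{1/2} \otimes \Sigma^{1/2})'$, the asserted limit.

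The main obstacle is the conditional central limit theorem for $W_n$ --- in particular, verifying that the conditional second-moment structure converges to that of $W$ simultaneously for all three transformation families. The permutation case hinges on Hoeffding's combinatorial CLT together with the exact relation $1_n' u_{ni} = 0$; the sign-change case needs the convergence $n \sum_k u_{ni,k}^2 u_{nj,k}^2 \to 1$, which rests on the rotational symmetry and delocalization of the Gaussian left singular vectors; and the orthogonal case requires Haar-measure facts about inner products of independent uniform unit vectors in growing dimension. Making the passage from the conditional to the unconditional joint limit precise, and handling the SVD sign indeterminacy cleanly, are the remaining, more routine, points.
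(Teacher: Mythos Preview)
Your approach is correct in outline and genuinely different from the paper's. You keep everything in the spectral basis, writing $\sqrt{n}(S_{n,\mathcal{A}}-\Sigma)=\sqrt{n}(S_n-\Sigma)+V_nL_n^{1/2}W_nL_n^{1/2}V_n'$ and handling $W_n$ by a conditional CLT given $X_n$, then invoking asymptotic independence of the two pieces. The paper instead reduces first to $\Sigma=\Lambda$ diagonal, uses $V_n\to_p I_p$ to rewrite the correction directly in the data columns as $b_{n,k\ell}=n^{-1/2}x_{n,k}'(I_n+T_{nk}'T_{n\ell})x_{n,\ell}+o_p(1)$, computes the limiting second moments, checks uniform integrability, and then proves the joint normal limit by three case-specific devices: a plain i.i.d.\ CLT for $\mathcal{J}$, an auxiliary-Haar-matrix decoupling trick for $\mathcal{O}$, and Cram\'er--Wold combined with a Stein-method CLT for local dependency neighbourhoods for $\mathcal{P}$. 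Your route is more uniform and conceptually cleaner, at the price of needing the singular-vector delocalization $n\sum_k u_{ni,k}^2u_{nj,k}^2\to 1$ for $\mathcal{J}$ (which your rotational-invariance argument supplies); the paper sidesteps this by staying in the data basis.

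The one place where your sketch understates the work is the $\mathcal{P}$ case. For a single off-diagonal entry Hoeffding's combinatorial CLT applies since $P_{ni}'P_{nj}$ is uniform, but for the \emph{joint} conditional limit the entries $W_{n,12},W_{n,13},\ldots$ share permutations, so a Cram\'er--Wold linear combination is a function of several independent permutations and is not a single Hoeffding statistic. The paper resolves exactly this obstacle with the dependency-neighbourhood CLT; your conditional framework would need an analogous multivariate permutation CLT at that step.
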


Theorem \ref{theo:covariances} offers us two insights: (a) All three forms of SA are equally efficient in covariance estimation. (b) Assume next that $O = I_p$ in the eigendecomposition $\Sigma = O \Lambda O'$ (since the rotation $O$ only fixes the coordinate system in which the data is observed, this choice does not limit our interpretations). Then, for instance when $p = 2$, the limiting covariance matrices of $S_n$ and $S_{n, \mathcal{A}}$ take the forms:
\begin{align}\label{eq:two_cov_matrices}
    \begin{pmatrix}
        2 \lambda_1^2 & 0 & 0 & 0 \\
        0 & \lambda_1 \lambda_2 & \lambda_1 \lambda_2 & 0 \\
        0 & \lambda_1 \lambda_2 & \lambda_1 \lambda_2 & 0 \\
        0 & 0 & 0 & 2 \lambda_2^2
    \end{pmatrix} \quad \mbox{and} \quad 
    \begin{pmatrix}
        2 \lambda_1^2 & 0 & 0 & 0 \\
        0 & 2 \lambda_1 \lambda_2 & 2 \lambda_1 \lambda_2 & 0 \\
        0 & 2 \lambda_1 \lambda_2 & 2 \lambda_1 \lambda_2 & 0 \\
        0 & 0 & 0 & 2 \lambda_2^2
    \end{pmatrix},
\end{align}
respectively, where $\lambda_1, \lambda_2$ are the marginal variances of the data. We observe that the elements $(1, 1), (4, 4)$, giving the asymptotic variances of the sample variances, are equal in the two matrices, meaning that the estimation of variances is equally efficient in original data and in SA-generated data. Whereas, the elements  $(2, 2), (2, 3), (3, 2), (3, 3)$ in the matrices \eqref{eq:two_cov_matrices}, corresponding to the asymptotic variances and covariances of the sample covariances are two times larger for SA than for the original data. Thus, we conclude that SA makes the estimation of the cross-terms in the covariance matrix $\Sigma$ more difficult. This is entirely natural considering that each column of $U_n$ is in SA mixed independently of each other, compromising our ability to detect interactions (sample covariances) between the columns, but leaving the marginal distribution (sample variances) of each column intact. For arbitrary rotation $O$, the equivalent interpretation holds, only this time in the coordinate system specified by $O$.


\section{Simulations} \label{sec:simu}

To complement the earlier asymptotic results, in this section we study the finite-sample utility of $\mathcal{P}$-SA, $\mathcal{J}$-SA and $\mathcal{O}$-SA through a simulation study. As discussed in Section \ref{sec:asymp}, achieving the exact limiting distribution necessitates an infinite sample size. Therefore, our focus in Section \ref{sec:finite} is to evaluate how good of an approximation the limiting distribution is for finite sample sizes. Rather than directly comparing empirical distributions to the limiting distribution, we examine the similarity between the empirical covariance matrices and their limiting counterparts. Moreover, since $\mathcal{P}$-SA, $\mathcal{J}$-SA and $\mathcal{O}$-SA are designed for data anonymization, we evaluate their capability in protecting privacy in Section \ref{sec:priv} by employing distance-based record linkage. The source code used for conducting the simulation study presented in this work is available on GitLab \url{https://gitlab.utu.fi/kakype/asymptotic-utility-of-spectral-anonymization.git}.

\begin{figure}[t]
\includegraphics[width=\textwidth]{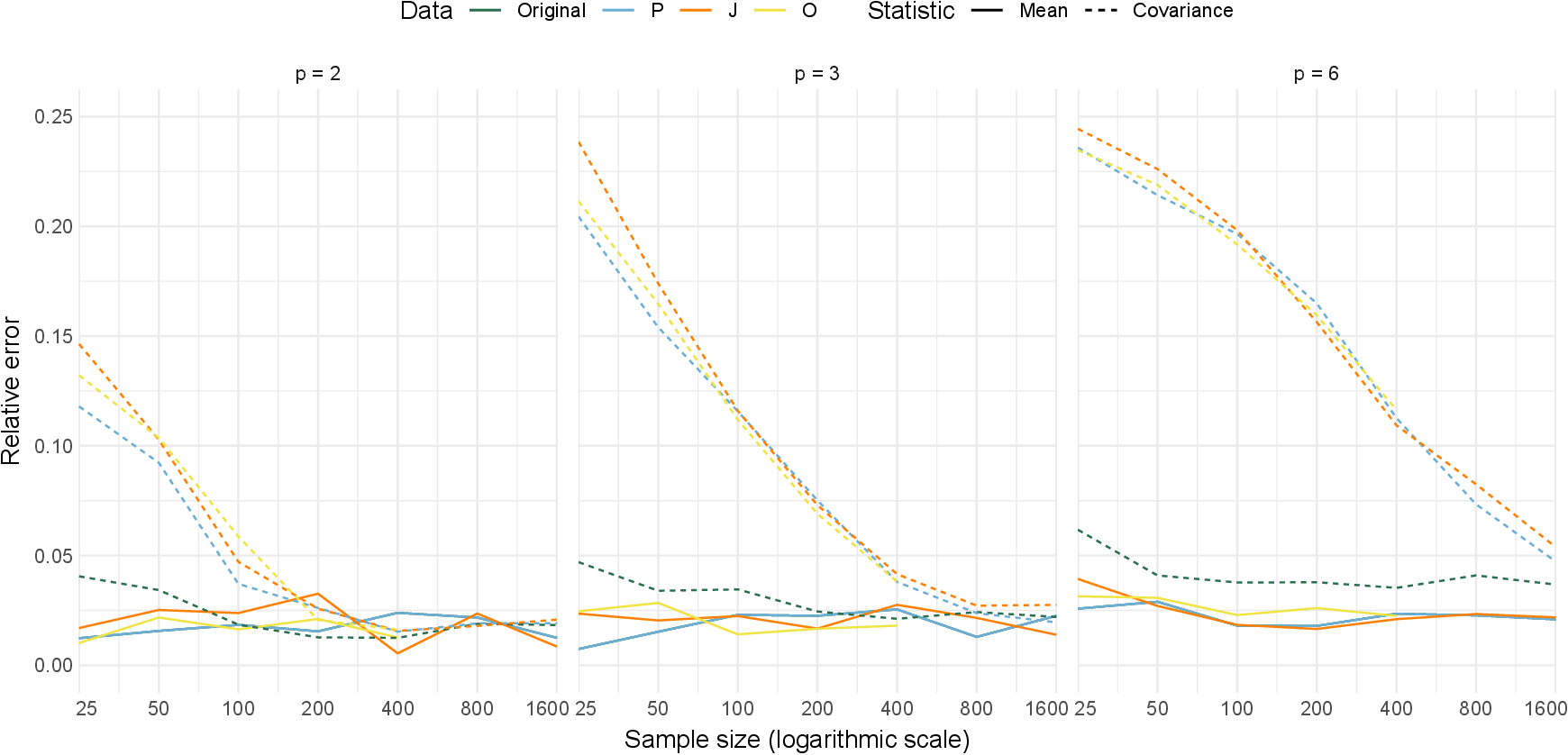}
\caption{The relative error of the empirical covariance matrices of sample mean (solid line) and sample covariance (dotted line) compared to their asymptotic covariance matrices for $\mathcal{P}$-SA (blue), $\mathcal{J}$-SA (orange), and $\mathcal{O}$-SA (yellow), when original data (green/darkest) is sampled from a normal distribution meeting the Assumption \ref{assu:eigenvalues}. Sample size is presented on a logarithmic scale.}
\label{fig:norm_met}
\end{figure}

\subsection{Finite-sample utility} \label{sec:finite}

We explore the impact of finite data on convergence, considering sample sizes $n = \{25, 50, 100, 200, 400, 800, 1600\}$ and number of variables $p = \{2, 3, 6\}$, reflecting plausible real-world scenarios. For each combination of $n$ and $p$, we sample original data $X_{n1}$ from $\mathcal{N}_p(\mu, \Sigma)$, where $\mu = (3, 3, \dots, 3) \in \mathbb{R}^p$ and $\Sigma$ represents a diagonal matrix with elements $(p, p-1, \dots, 1)$ to fulfill Assumption~\ref{assu:eigenvalues}. Additionally, we explore the consequences of deviating from normality and Assumption \ref{assu:eigenvalues} by generating three alternative datasets: $X_{n2}$, $X_{n3}$ and $X_{n4}$, whose rows are sampled i.i.d. from  $\mathcal{N}_p(\mu, I_p)$, $Poisson(\lambda_1)$ and $Poisson(\lambda_2)$, respectively, where $\lambda_1 = (p, p-1, \dots, 1)$, $\lambda_2 = (1, 1, \dots, 1) \in \mathbb{R}^p$ and $Poisson(c)$ denotes a distribution with indepedent $Poisson(c_k)$-distributed marginals. Note that, by the equivariance properties of the mean and covariance matrix, restricting our attention to data with uncorrelated variables is without loss of generality.

Each dataset was sampled $M = 10 000$ times and anonymized using $\mathcal{P}$-SA, $\mathcal{J}$-SA and $\mathcal{O}$-SA, respectively. However, due to the $O(n^3)$ complexity of $\mathcal{O}$-SA, we opted to perform this form of anonymization only for sample sizes $n \leq 400$. Subsequently, empirical values were computed according to the left-hand sides of Theorems \ref{theo:means} and \ref{theo:covariances} and the relative errors (RE) of their covariance matrices across the $M$ datasets were then calculated. By our theoretical results, these values approach zero as $n \rightarrow \infty$. For instance, in the case of sample mean (Theorem \ref{theo:means}) and $\mathcal{P}$-SA, the calculated value is 
$$
\text{RE}_{\Bar{x}_{n, \mathcal{P}}} = \frac{\| \widehat{Cov}_M(\sqrt{n} (\Bar{x}_{n, \mathcal{P}} - \mu))- \Sigma \|_{F}}{\|\Sigma\|_{F}},
$$ where $\mu$ and $\Sigma$ are the mean and covariance parameters derived from the aforementioned data distributions, and $\|\cdot\|_{F}$ represents the Frobenius norm. Figures~\ref{fig:norm_met} and \ref{fig:norm_viol} illustrate the results for normally distributed data, with corresponding representations for Poisson distribution found in Appendix \ref{sec:appen_fig}.

As predicted by Theorems \ref{theo:means} and \ref{theo:covariances}, all curves in Figure \ref{fig:norm_met} approach zero when $n$ grows (notably, even with small sample sizes, the convergence for the sample mean is eminent). There are no discernible differences in the convergence among $\mathcal{P}$-SA, $\mathcal{J}$-SA, and $\mathcal{O}$-SA towards their asymptotic covariance matrices. The relative error for the sample covariance depends on $p$, higher dimensions requiring a larger sample size for the RE to converge. When Assumption \ref{assu:eigenvalues} is violated, Figure~\ref{fig:norm_viol} indicates that the empirical covariance matrix of the sample covariance matrix converges to a different constant than outlined in Theorem~\ref{theo:covariances}. This implies that Assumption \ref{assu:eigenvalues} is necessary for our results to hold. Similar results were obtained regarding deviation from the normality assumption (Appendix \ref{sec:appen_fig}), where the sample covariance curves, even for the original data, failed to converge to zero. Hence, the results in Section \ref{sec:asymp} are not guaranteed to hold if the normality assumption is discarded, see Section \ref{sec:conclusion} for further discussion of this point.

\begin{figure}[t]
\includegraphics[width=\textwidth]{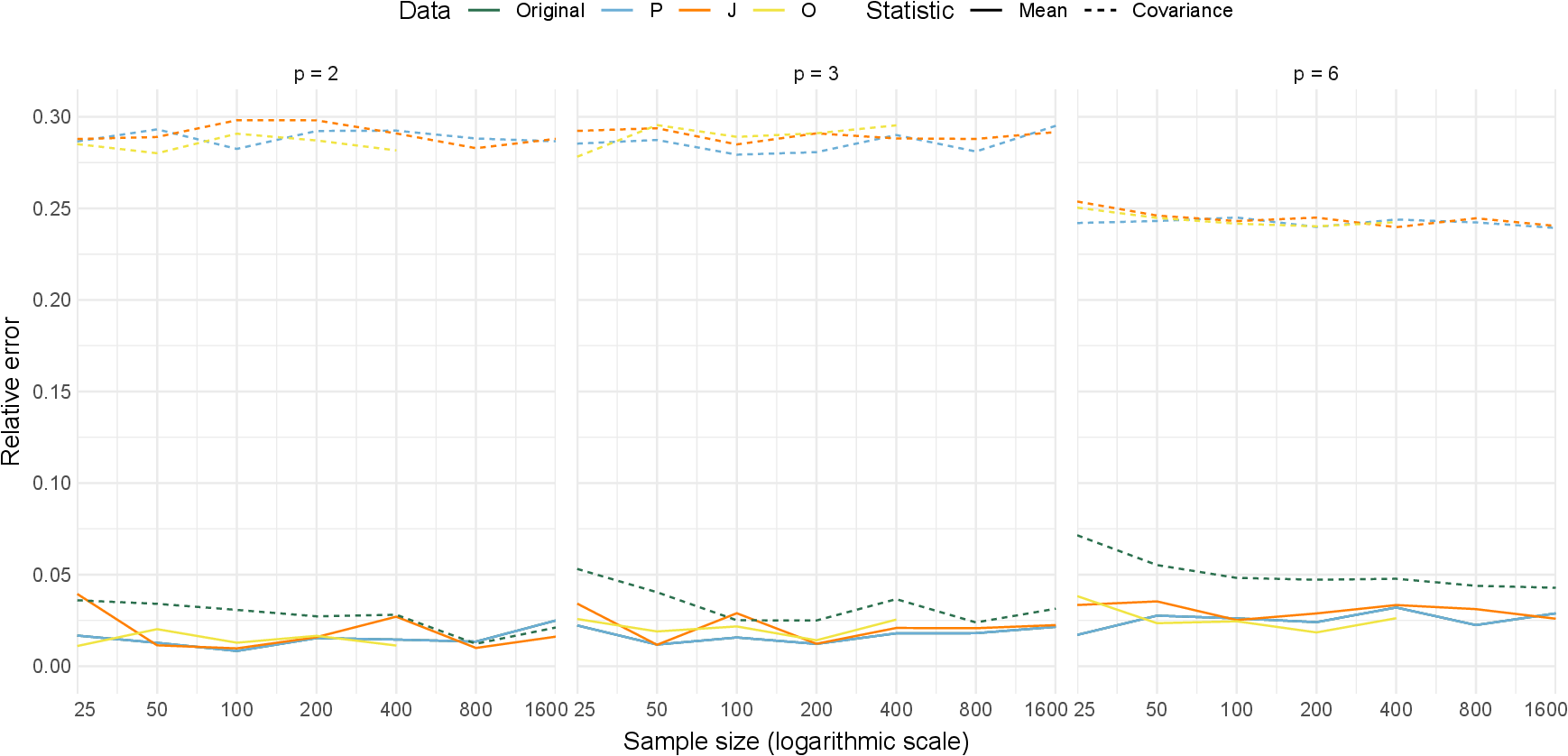}
\caption{The relative error of the empirical covariance matrices of sample mean (solid line) and sample covariance (dotted line) compared to their asymptotic covariance matrices for $\mathcal{P}$-SA (blue), $\mathcal{J}$-SA (orange), and $\mathcal{O}$-SA (yellow), when original data (green/darkest) is sampled from a normal distribution violating the Assumption \ref{assu:eigenvalues}. Sample size is presented on a logarithmic scale.}
\label{fig:norm_viol}
\end{figure}

\subsection{Privacy} \label{sec:priv}

In addition to assessing empirical convergence with finite data, we wanted to evaluate the privacy provided by $\mathcal{P}$-SA, $\mathcal{J}$-SA and $\mathcal{O}$-SA. We used distance-based record linkage, akin to the approach outlined in \cite{domingo2004disc},  with the distinction that all variables were utilized to compute the shortest Euclidean distance (EUC) between an anonymized record and any original record. The mean distance across all $m = 1, \dots, M$ simulated datasets was then calculated as follows: 
$$
\text{EUC}_{\mathcal{A}} = \frac{1}{M}\sum_{m=1}^{M}\frac{1}{n}\sum_{i=1}^{n} \min_{j} \|x_{(m),i,\mathcal{A}} - x_{(m), j}\|_2,
$$ 
where $x_{(m),i,\mathcal{A}}$ denotes the $i$th observation vector in the $m$th anonymized data and $x_{(m),i}$ denotes the equivalent for the $m$th original data. The distances were computed to the unstandardized data, aligning with the common practice of publishing datasets in their original scale. This anticipates potential linkage attempts by adversaries with external data sources. $\text{EUC}_{\mathcal{A}}$ embodies the privacy-utility trade-off, where values closer to zero indicate better utility to the original data but also imply higher privacy risk, with a distance of zero denoting essentially identical datasets. Consequently, higher values signify stronger privacy protection. Results of this comparison for normally distributed data meeting Assumption \ref{assu:eigenvalues} are depicted in Figure \ref{fig:euc}.

\begin{figure}[t]
\includegraphics[width=\textwidth]{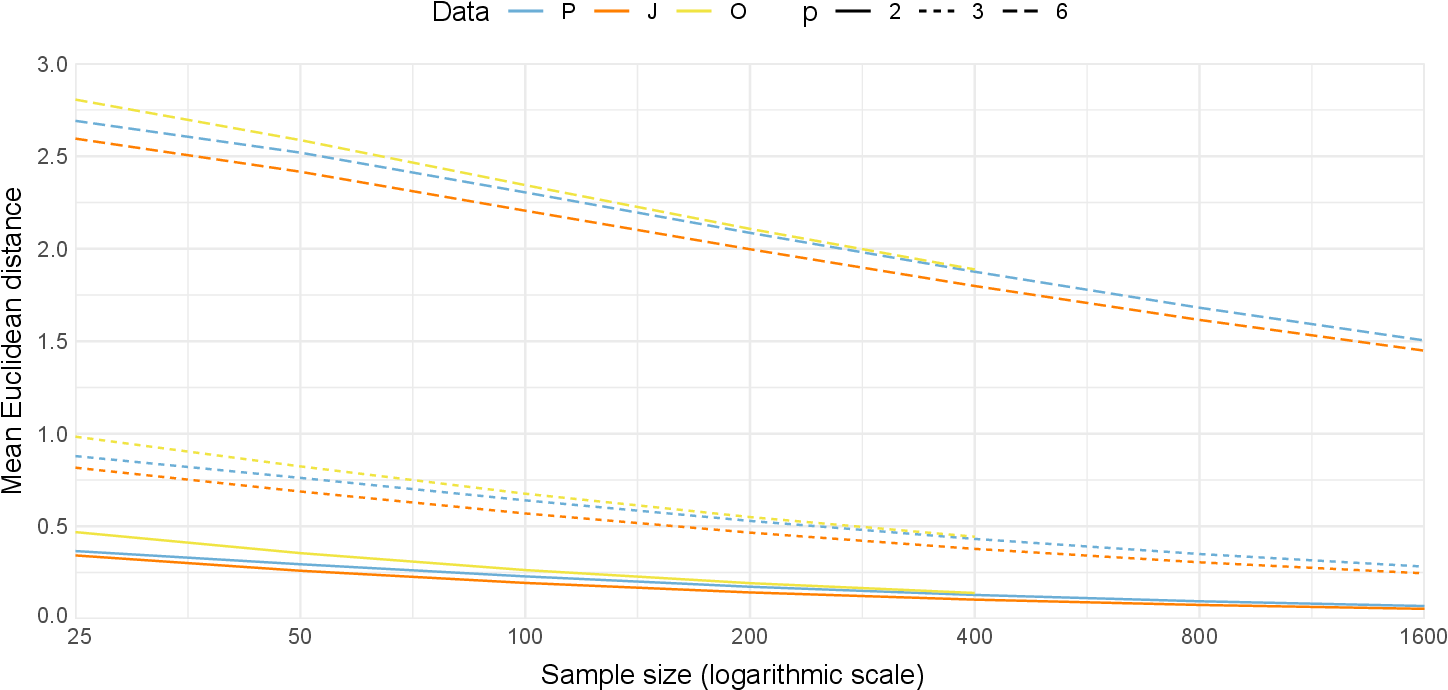}
\caption{$\text{EUC}_{\mathcal{A}}$ when the original data is sampled from normal distribution meeting Assumption \ref{assu:eigenvalues}. The y-axis illustrates the mean distance $\text{EUC}_{\mathcal{A}}$ between records in anonymized data and any record in the original data across all datasets, while the x-axis denotes the sample size on a logarithmic scale. The number of variables $p$ is distinguished by different linetypes and each anonymization approach is represented by a unique color.}
\label{fig:euc}
\end{figure}

While all SAs demonstrated comparable performance, $\mathcal{O}$-SA exhibited the best privacy protection, consistently yielding higher distances compared to $\mathcal{P}$-SA and $\mathcal{J}$-SA, as illustrated in Figure \ref{fig:euc}. These findings remained consistent across all simulation settings (results not shown). However, $\text{EUC}_{\mathcal{A}}$ is never precisely zero, as this would imply essentially complete equivalence between the anonymized and original data, which is highly unlikely for the given SAs. Yet, these algorithms can produce some matches by chance, as demonstrated in Figure \ref{fig:demo}, and for any individual, a direct match would constitute a privacy violation. Therefore, we supplemented our analysis by evaluating the proportion of matches, i.e., correctly linked records, relative to the sample size across all simulated datasets $m = 1, \dots, M$:
$$
\text{Matches}_{m,\mathcal{A}} = \frac{1}{n}\sum_{i=1}^{n} \begin{cases} 1 & \text{if } \min_{j} \|x_{(m), i,\mathcal{A}} - x_{(m),j}\|_2 < \delta, \\ 0 & \text{otherwise}, \end{cases}
$$
where the tolerance $\delta$ was set to $10^{-6}$. Results for normally distributed data meeting Assumption \ref{assu:eigenvalues} are presented in Figure \ref{fig:match}, with a lower proportion indicating better privacy protection (fewer matches).

\begin{figure}[h]
\includegraphics[width=\textwidth]{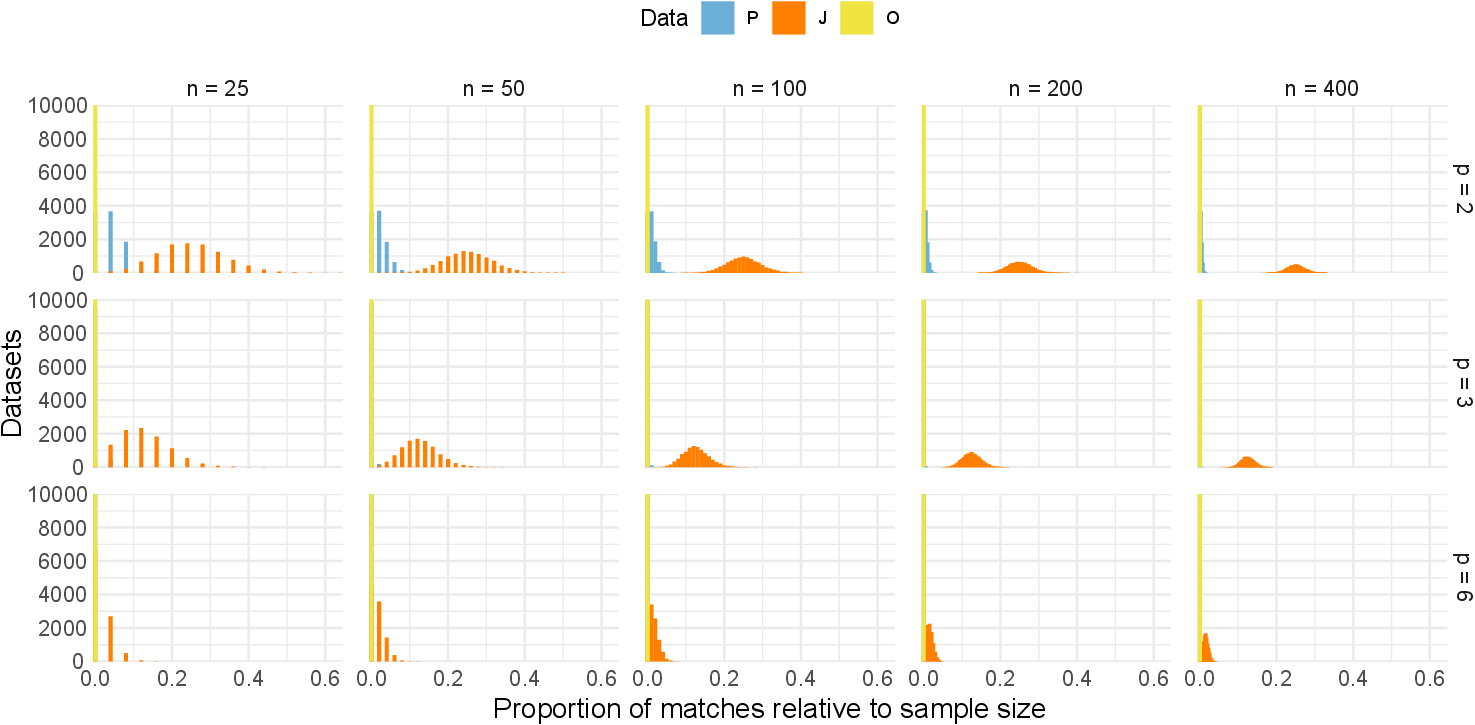}
\caption{Histograms of match proportions when the original data is sampled from normal distribution meeting Assumption \ref{assu:eigenvalues}. The y-axis represents the number of simulated datasets while the x-axis indicates the proportion of matches relative to the sample size, with different SAs represented by distinct colors. Sample sizes $n > 400$ have been omitted here, as they introduce no new information.}
\label{fig:match}
\end{figure}

$\mathcal{O}$-SA did not produce any matches, as illustrated in Figure \ref{fig:match}, with a proportion of zero observed across all datasets. Conversely, both $\mathcal{P}$-SA and $\mathcal{J}$-SA approaches resulted in matches, with the former indicating a lower frequency compared to the latter. However, as $p$ increased, the number of matches decreased, ultimately leading to no matches for $\mathcal{P}$-SA and approximately rate of 0.03 for $\mathcal{J}$-SA. This trend remained consistent across all simulation scenarios, with the Poisson distribution exhibiting slightly higher match proportion than the normal distribution, and deviations from Assumption 1 having only a marginal effect (results not shown).

\section{Conclusion}\label{sec:conclusion}

Based on our findings with finite data, none of the three methods demonstrated clear superiority in utility convergence. However, concerning privacy, $\mathcal{O}$-SA outperformed $\mathcal{P}$-SA and $\mathcal{J}$-SA by never generating identical records. Hence, we advocate for employing $\mathcal{O}$-SA for data anonymization, notwithstanding its computational intensity compared to $\mathcal{P}$-SA and $\mathcal{J}$-SA. For a dataset size of $n = 1000$ and $p=6$, the runtime for $\mathcal{O}$-SA is on average 2.44 seconds, while for $\mathcal{P}$-SA and $\mathcal{J}$-SA, it is 0.002 and 0.001 seconds, respectively (using AMD Ryzen 5 5600X). The complexity of SVD is the same as for principal component analysis, meaning that the method scales well. In practical scenarios, data anonymization is typically a one-time operation, mitigating concerns over computational overhead, and all SA algorithms can be applied to larger datasets (both in terms of $n$ and $p$) provided the data is continuous. However, ensuring asymptotic utility (Section \ref{sec:asymp}) necessitates additional assumptions. Should computational efficiency become paramount, $\mathcal{P}$-SA emerges as the secondary choice 
(no matches in our simulations with $p=6$). In case privacy is given secondary priority, then $\mathcal{P}$-SA offers superior mean estimation efficiency to its two competitors (Theorem \ref{theo:means}) and should be used. Nonetheless, further research is warranted to evaluate these SA algorithms using empirical real-world data and utility scenarios.



%

Our simulations showed that Assumption \ref{assu:eigenvalues} cannot be dispensed with. SA is based on perturbing the singular vectors of $H_n X_n$ and, intuitively, the role of Assumption~\ref{assu:eigenvalues} is to ensure that these vectors are asymptotically identifiable. Similarly, application to Poisson-distributed data revealed that our conclusions do not directly extend outside of normality. However, we expect that equivalent results could be obtained also under other distributional assumptions, with suitable techniques of proof. And while we chose to derive our theoretical results under the normality assumption due to the ubiquitousness of the Gaussian distribution, the SA-method itself remains applicable also outside of normal data.


A prospective research direction would be to conduct analogous studies for competing methods, enabling utility and privacy comparisons between the methods. E.g., data shuffling \cite{muralidhar2006data} and correlated noise addition \cite{shlomo2008protection} have been shown to preserve second-order statistics. Unlike SA, data shuffling and noise addition use distributional assumptions as part of their data generation and their impact on the methods' utility could be quantified with results such as Theorem~\ref{theo:means} and \ref{theo:covariances}. For categorical data, Post-randomization Method (PRAM) leads to unbiased estimates of univariate moments, see, e.g. \cite{shlomo2008protection}, and also seems applicable to asymptotic analysis. However, PRAM operates on a single variable at a time and more elaborate techniques are required to preserve the dependency structures between multiple categorical variables. Finally, investigating modification constraints, such as selectively altering the first $j$ columns of $U_n$, presents a compelling research area, especially for high-dimensional datasets.




\begin{credits}
\subsubsection{\ackname}
The study was supported by the Finnish Cultural Foundation (grant 00220801) and the Research Council of Finland (grants 347501 and 353769). The authors would like to thank two anonymous reviewers for their valuable comments.

\subsubsection{\discintname}
The authors have no competing interests to declare that are relevant to the content of this article.

\end{credits}

\appendix

\section{Proofs of technical results}\label{sec:appen_proofs}

\begin{proof}[Proof of Theorem \ref{theo:means}]
    We begin with the $\mathcal{P}$-SA. Denoting $Y_n :=  X_{n, \mathcal{P}}$, we have $\Bar{y}_n = \frac{1}{n} Y_n' 1_n = \frac{1}{n} V_n D_n U_{n0}' 1_n + \Bar{x}_n$.
    Observing now that $U_{n0}' 1_n = U_{n}' 1_n$ and that $V_n D_n U_{n}' 1_n = 0$, the desired result follows from the central limit theorem.

    For the $\mathcal{J}$-SA, we first consider the case where the data come from $\mathcal{N}_p(0, \Lambda)$ for some diagonal matrix $\Lambda$ with strictly positive and mutually distinct diagonal elements. We then write
    \begin{align}\label{eq:mean_J_1}
        \sqrt{n} \Bar{y}_n = \frac{1}{\sqrt{n}} Y_n' 1_n = \frac{1}{\sqrt{n}} V_n D_n U_{n0}' 1_n + \sqrt{n}\Bar{x}_n.
    \end{align}
    Now, the $k$th element of $D_n U_{n0}' 1_n$ equals
    \begin{align}\label{eq:mean_J_2}
        d_{nk} u_{nk}' J_{nk} 1_n = e_k' D_n U_n' J_{nk} 1_n = e_k' V_n' X_n' H_n J_{nk} 1_n.
    \end{align}
    We next show that $X_n' H_n J_{nk} 1_n/\sqrt{n}$ admits a limiting distribution and is, as such, stochastically bounded. To see this, we let the $i$th diagonal element of $J_{nk}$ be $s_{n,i} \sim \mathrm{Uniform}\{-1 ,1\}$ and write $\frac{1}{\sqrt{n}} X_n' H_n J_{nk} 1_n = \frac{1}{\sqrt{n}} \sum_{i = 1}^n x_{n, i} s_{n,i} - \sqrt{n} \Bar{x}_n \frac{1}{n} \sum_{i = 1}^n s_{n,i}$.
    Now, the first term has a limiting distribution (by CLT) and the second term is $o_p(1)$ (by CLT and Slutsky's lemma). Thus, fixing the sign of $V_n$ such that $V_n \rightarrow_p I_p$, we have, by \eqref{eq:mean_J_2} that $\frac{1}{\sqrt{n}} d_{nk} u_{nk}' J_{nk} 1_n = \frac{1}{\sqrt{n}} e_k' X_n' H_n J_{nk} 1_n + o_p(1)$.
    Using the same in \eqref{eq:mean_J_1}, we also obtain $\sqrt{n} \Bar{y}_n = \sqrt{n}\Bar{x}_n + \frac{1}{\sqrt{n}} D_n U_{n0}' 1_n + o_p(1)$.
    This implies that the $k$th element of $\sqrt{n} \Bar{y}_n$ has the expansion $(\sqrt{n} \Bar{y}_n)_k = \frac{1}{\sqrt{n}} \sum_{i = 1}^n x_{n, ik} (1 + s_{n,i}) + o_p(1)$. As $x_{n, ik} (1 + s_{n, i})$, $i = 1, \ldots, n$, are i.i.d. with mean zero and variance $2 \mathrm{Var}(x_{n, ik})$, and as $x_{n, ik} (1 + s_{n, i})$ and $x_{n, i\ell} (1 + s_{n, i})$ are uncorrelated for $k \neq \ell$, the limiting distribution of $\sqrt{n} \Bar{y}_n$ is $\mathcal{N}_p(0, 2 \Lambda)$.

    Let now $Z_n = X_n O' + 1_n \mu'$ for some $p \times p$ orthogonal matrix $O$ and $\mu \in \mathbb{R}^p$. Then,
    $ Z_{n, \mathcal{P}} = (X_{n, \mathcal{P}} -  1_n \Bar{x}_n')O' + 1_n (O \Bar{x}_n + \mu)'$, thus implying that $ \frac{1}{n} Z_{n, \mathcal{P}}' 1_n =  O \frac{1}{n}  X_{n, \mathcal{P}}' 1_n - O \Bar{x}_n + O \Bar{x}_n + \mu$ 
    and $ \sqrt{n} \left( \frac{1}{n} Z_{n, \mathcal{P}}' 1_n - \mu \right) = \sqrt{n} O \Bar{y}_n \rightsquigarrow \mathcal{N}_p(0, 2 O \Lambda O')$.
    This completes the proof for $\mathcal{J}$-SA.

    For $\mathcal{O}$-SA, the proof is similar to $\mathcal{J}$-SA, and we point out only the differences next. To see that $X_n' H_n O_{nk} 1_n/\sqrt{n}$ is $\mathcal{O}_p(1)$, we observe that $O_{nk} 1_n \sim \sqrt{n} O_{nk} e_1 \sim \sqrt{n} u_n$, where $u_n$ is uniform on the unit sphere $\mathbb{S}^{n - 1}$. Moreover, letting $\chi_n$ denote a random variable obeying a $\chi$-distribution with $n$ degrees of freedom that is independent of $u_n$, we have $b_n := \chi_n u_n \sim \mathcal{N}_n(0, I_n)$ and $\chi_n/\sqrt{n} \rightarrow_p 1$. Consequently, denoting a generic column of $X_n$ by $x_n \sim \mathcal{N}_n(0, \lambda I_n)$, we have $ (1/\sqrt{n}) x_n' H_n O_{nk} 1_n \sim x_n' H_n u_n = x_n' u_n - \Bar{x}_n 1_n' u_n = b_n^{-1} (1/\sqrt{n}) x_n' z_n - \sqrt{n} \Bar{x}_n \Bar{z}_n b_n^{-1} $, showing that $X_n' H_n O_{nk} 1_n/\sqrt{n}$ is stochastically bounded. Hence,
    \begin{align*}
        (\sqrt{n} \Bar{y}_n)_k = ( \sqrt{n}\Bar{x}_n )_k + b_n^{-1} \frac{1}{\sqrt{n}} e_k' X_n' z_n + o_p(1) = \frac{1}{\sqrt{n}} \sum_{i = 1}^n x_{n, ik} (1 + z_{n, i}) + o_p(1).
    \end{align*}
    The limiting distribution of this is $\mathcal{N}(0, 2 \mathrm{Var}(x_{n, ik}))$, and rest of the proof follows similarly as for $\mathcal{J}$-SA.
\end{proof}

    

The following lemma is a direct consequence of Proposition 4.1 in \cite{viana2001covariance}.

\begin{lemma}\label{lem:moments_permutation}
   Let $P$ be uniformly random $n \times n$ permutation matrix. Then, $\mathrm{E}(P) = (1/n) 1_n 1_n'$, $\mathrm{E}\{ \mathrm{tr}(P^2) \} = 2$ and $\mathrm{E}[\{ \mathrm{tr}(P) \}^2] = 2$.
\end{lemma}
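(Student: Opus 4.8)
The plan is to give a short self-contained combinatorial argument, although all three identities also follow by specializing the mixed-moment formulas for the entries of a uniform permutation matrix given in Proposition~4.1 of \cite{viana2001covariance}. Let $\sigma$ denote the random permutation of $\{1, \dots, n\}$ whose matrix is $P$, so that $P_{ij} = \mathbf{1}\{\sigma(j) = i\}$. Since $\sigma$ is uniform, $\mathrm{P}(\sigma(j) = i) = 1/n$ for every $i, j$, whence $\mathrm{E}(P_{ij}) = 1/n$ and $\mathrm{E}(P) = (1/n) 1_n 1_n'$, which is the first claim.

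For the two trace identities I would pass to the cycle structure of $\sigma$. First, $\mathrm{tr}(P)$ equals the number of fixed points of $\sigma$, and $\mathrm{tr}(P^2)$ equals the number of fixed points of $\sigma^2$, i.e., the number of indices lying in a cycle of $\sigma$ of length $1$ or $2$; writing $C_1$ and $C_2$ for the numbers of $1$- and $2$-cycles of $\sigma$, this gives $\mathrm{tr}(P) = C_1$ and $\mathrm{tr}(P^2) = C_1 + 2 C_2$. By linearity of expectation over indicators, $\mathrm{E}(C_1) = \sum_{i=1}^n \mathrm{P}(\sigma(i) = i) = n \cdot (1/n) = 1$, and, summing over the $\binom{n}{2}$ unordered pairs, $\mathrm{E}(C_2) = \binom{n}{2} \cdot \mathrm{P}(\sigma(i) = j, \sigma(j) = i) = \binom{n}{2} \cdot (n-2)!/n! = 1/2$. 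Hence $\mathrm{E}\{\mathrm{tr}(P^2)\} = \mathrm{E}(C_1) + 2\,\mathrm{E}(C_2) = 1 + 1 = 2$.

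For the last identity I would compute the second moment of the fixed-point count directly: $\mathrm{E}[\{\mathrm{tr}(P)\}^2] = \mathrm{E}(C_1^2) = \sum_{i=1}^n \sum_{j=1}^n \mathrm{P}(\sigma(i) = i, \ \sigma(j) = j)$. The $n$ diagonal terms ($i = j$) each equal $1/n$ and contribute $1$ in total, while the $n(n-1)$ off-diagonal terms ($i \neq j$) each equal $(n-2)!/n! = 1/\{n(n-1)\}$ and again contribute $1$ in total, so $\mathrm{E}[\{\mathrm{tr}(P)\}^2] = 2$.

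There is essentially no obstacle here; the only point requiring a moment's care is that the stated numerical values hold for $n \ge 2$ (for $n = 1$ both traces equal $1$), which is immaterial in the asymptotic setting of this paper. I prefer organizing the computation by separating the ``diagonal'' ($1$-cycle) and ``off-diagonal'' ($2$-cycle) contributions, since this is precisely where the factor-of-two structure of the two limiting values originates.
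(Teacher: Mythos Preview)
Your argument is correct. The paper itself does not actually prove the lemma: it simply records that the three identities are a direct consequence of Proposition~4.1 in \cite{viana2001covariance}. Your proposal cites the same source but, in addition, supplies the elementary cycle-count computation that makes the lemma self-contained; the $1$-cycle/$2$-cycle decomposition you use is exactly the combinatorics underlying that proposition, so the two routes coincide in substance, with yours being more explicit. Your remark that the numerical values require $n\ge 2$ is accurate and, as you note, irrelevant in the paper's asymptotic regime.
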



\begin{proof}[Proof of Theorem \ref{theo:covariances}]
    Starting again with the case of $\mathcal{N}_p(0, \Lambda)$-distributed data and using the notation of the proof of Theorem \ref{theo:means}, we have
    \begin{align*}
        \sqrt{n} (S_{n, \mathcal{P}} - \Lambda) = \sqrt{n} \left( \frac{1}{n} Y_n' Y_n - \Lambda - \Bar{y}_n  \Bar{y}_n' \right) = \sqrt{n} \left( \frac{1}{n} Y_n' Y_n - \Lambda \right) + o_p(1),
    \end{align*}
    as $\sqrt{n} \Bar{y}_n = \mathcal{O}_p(1)$. From Theorem \ref{theo:means} we have $V_n D_n U_{n0}' 1_n \Bar{x}_n/\sqrt{n} = o_p(1)$, giving
    \begin{align}\label{eq:cov_P_1}
    \begin{split}
        \sqrt{n} (S_{n, \mathcal{P}} - \Lambda) =& \sqrt{n} \left( \frac{1}{n} V_n D_n U_{n0}' U_{n0} D_n V_n' - \Lambda \right) + o_p(1), \\
        =& \sqrt{n} \left( \frac{1}{n} X_n' H_n X_n - \Lambda \right) + \frac{1}{\sqrt{n}} V_n D_n G_n D_n V_n' + o_p(1), \\
        =& \sqrt{n} \left( \frac{1}{n} X_n' X_n - \Lambda \right) + \frac{1}{\sqrt{n}} V_n D_n G_n D_n V_n' + o_p(1),
    \end{split}
    \end{align}
    where $G_n \in \mathbb{R}^{p \times p}$ has $g_{n, kk} = 0$, $g_{n, k\ell} = u_{nk}' T_{nk}' T_{n\ell} u_{n \ell}$, where $T_{nk}$ are either permutation, sign-change or orthogonal matrices, depending on the type of SA. We next show that, for $k \neq \ell$, the quantity $f_{n, k\ell} := d_{nk} d_{n\ell} g_{n, k\ell}/\sqrt{n}$ is $\mathcal{O}_p(1)$.
    
    For $\mathcal{P}$-SA, this is seen by writing $ \sqrt{n} f_{n, k\ell} = e_k' D_n U_n' P_{nk}' P_{n\ell} U_n D_n e_\ell = e_k' V_n' X_n' H_n P_{nk}' P_{n\ell} H_n X_n V_n e_\ell 
        = e_k' V_n' (X_n' P_{nk}' P_{n\ell} X_n) V_n e_\ell - n e_k' V_n' (\Bar{x}_n \Bar{x}_n') V_n e_\ell $,
    giving $ f_{n, k\ell} = n^{-1/2} e_k' V_n' (X_n' P_{nk}' P_{n\ell} X_n) V_n e_\ell + o_p(1) = n^{-1/2} x_{n, k}' P_{nk}' P_{n\ell} x_{n, \ell} + o_p(1) $,
    where $x_{n, k}$ is the $k$th column of $X_n$. For the second equality, we have used $ \frac{1}{\sqrt{n}} X_n' P_{nk}' P_{n\ell} X_n = \mathcal{O}_p(1) $
    which holds for the off-diagonal elements by the independence of $x_{n, k}$ and $x_{n, \ell}$. For the diagonal elements, the boundedness is equivalent to $x_n' P_n x_n/\sqrt{n} = \mathcal{O}_p(1)$ when $x_n \sim \mathcal{N}_n(0, I_n)$ and $P_n$ is a uniformly random $n \times n$ permutation matrix. Since $\mathrm{E}(x_n' P_n x_n) = 0$, by \cite[Theorem 14.4-1]{bishop2007discrete},
    \begin{align}\label{eq:bishop_2007}
        \frac{1}{\sqrt{n}} x_n' P_n x_n = \frac{1}{\sqrt{n}} \mathrm{SD} (x_n' P_n x_n) \cdot \mathcal{O}_p(1).
    \end{align}
    Now, $\mathrm{E}(x_n x_n' P_n x_n x_n'|P_n) = \mathrm{tr}(P_n) I_n + P_n + P_n'$, giving, using Lemma \ref{lem:moments_permutation}, that $\mathrm{Var}(x_n' P_n x_n) = \mathrm{E}[\{ \mathrm{tr}(P_n) \}^2] + \mathrm{E} \{ \mathrm{tr}(P_n^2) \} + n = n + 4$.
    Plugging in to \eqref{eq:bishop_2007}, the stochastic boundedness of $X_n' P_{nk}' P_{n\ell} X_n/\sqrt{n}$ and $f_{n, k\ell}$ now follows.

    For $\mathcal{J}$-SA, the reasoning is similar but uses the facts that $\frac{1}{n} X_n' J_{nk} J_{n\ell} 1_n = o_p(1)$ and $\frac{1}{n} 1_n' J_{nk} J_{n\ell} 1_n = o_p(1)$
    where the former holds because $J_{n\ell} J_{nk} X_n \sim X_n$. Furthermore, instead of Lemma~\ref{lem:moments_permutation}, we use $\mathrm{tr}(J_n^2) = n$ and $\mathrm{E}[\{ \mathrm{tr}(J_n) \}^2] = n$.

    For $\mathcal{O}$-SA, we similarly have that $\frac{1}{n} X_n' O_{nk} O_{n\ell} 1_n = o_p(1)$ and $\frac{1}{n} 1_n' O_{nk} O_{n\ell} 1_n = o_p(1)$,
    since $O_{n\ell} O_{nk} X_n \sim X_n$ and since, reasoning as in the proof of Theorem \ref{theo:means}, we have $\frac{1}{n} 1_n' O_{nk} O_{n\ell} 1_n = \frac{1}{n} z_{n1}' z_{n2} + o_p(1)$,
    where $z_{n1}, z_{n2} \sim \mathcal{N}_n(0, I_n)$ are independent of each other. Also, instead of Lemma \ref{lem:moments_permutation}, we use for $\mathcal{O}$-SA the bounds $\mathrm{tr}(O_n^2) \leq n$ and $\mathrm{E}\{ ( P_{n, 11} + \cdots + P_{n,nn} )^2 \} \leq \{ [\mathrm{E}(P_{n, 11}^2)]^{1/2} + \cdots + [\mathrm{E}(P_{n, nn}^2)]^{1/2} \}^2$,
    together with $\mathrm{E}(P_{n, ii}) = 0$ and $\mathrm{E}(P_{n, ii}^2) = 1/n$.

    Thus, for every form of SA, by \eqref{eq:cov_P_1}, we have $ \sqrt{n} (S_{n, \mathcal{P}} - \Lambda) = \sqrt{n} ( \frac{1}{n} X_n' X_n - \Lambda ) + \frac{1}{\sqrt{n}} D_n G_n D_n + o_p(1) $. Writing out the individual elements of $B_n := \sqrt{n} (S_{n, \mathcal{P}} - \Lambda)$, the diagonal and off-diagonal thus read
    \begin{align}
        b_{n, kk} &= \sqrt{n} \left( \frac{1}{n} x_{n, k}' x_{n, k} - \lambda_k \right) + o_p(1), \label{eq:bnkk} \\
        b_{n, k\ell} &=  \frac{1}{\sqrt{n}} x_{n, k}' (I_n + T_{nk}' T_{n\ell}) x_{n, \ell} + o_p(1). \label{eq:bnkl}
    \end{align}
    What remains now is to show that $b_{n, kk}, b_{n, k\ell}$, $k, \ell = 1, \ldots, p$, $k \neq \ell$, have a limiting joint normal distribution with the desired covariance matrix. We begin by computing the covariance matrix. For $\mathcal{P}$-SA, direct computation gives $\mathrm{Var}(b_{n, kk}) = 2 \lambda_k^2$ and $\mathrm{Var}(b_{n, k\ell}) = 2 \lambda_k \lambda_\ell \{ 1 + (1/n) \mathrm{E} \mathrm{tr}(P_{nk}' P_{n\ell}) \} = 2 (1 + 1/n) \lambda_k \lambda_\ell = 2 \lambda_k \lambda_\ell + o(1)$,
    where we have used the fact that $\mathrm{E}(P_{nk}) = 1_n 1_n'/n$. Similarly, we obtain the following non-zero covariances (rest of the covariances are zero): $\mathrm{Cov}(b_{n, k \ell}, b_{n, \ell k}) = \mathrm{Var}(b_{n, k\ell}) = 2 \lambda_k \lambda_\ell + o(1)$.
    Analogous computation reveals that the same asymptotic variances and covariances are obtained also for $\mathcal{J}$-SA and $\mathcal{O}$-SA. Consequently, in each case, the limiting covariance matrix of $\sqrt{n} \mathrm{vec} (S_{n, \mathcal{P}} - \Lambda)$ is $(\Lambda^{1/2} \otimes \Lambda^{1/2})[2 I_{p^2} + 2 K_{p,p} - 2 \mathrm{diag} \{ \mathrm{vec}(I_p) \} ](\Lambda^{1/2} \otimes \Lambda^{1/2}) $.
    The form for a general normal distribution now follows with equivariance arguments as in the proof of Theorem \ref{theo:means}. That the moments of the limiting distribution actually are the limits of the moments we computed earlier, is guaranteed by showing that $b_{n, kk}$ and $b_{n, k\ell}$ are uniformly integrable. E.g., for $b_{n, k\ell}$, this can be done by denoting $A := (1/\sqrt{n}) x_{n, k}' x_{n, \ell}$ and $B := (1/\sqrt{n}) x_{n, k}' T_{nk}' T_{n\ell} x_{n, \ell}$, using Young's inequality to bound $\mathrm{E}\{(A + B)^4\} \leq a \mathrm{E}(A^4) + b \mathrm{E}(B^4)$ for some scalars $a, b \in \mathbb{R}$ and observing that  $T_{nk} x_{n, k} \sim \mathcal{N}_n(0, I_n)$ for each form of SA. The boundedness of $\mathrm{E}(A^4)$ and $\mathrm{E}(B^4)$ now follows from the moments of the Gaussian distribution.

    Next, we show that the limiting distribution exists. By the equivariance properties of the multivariate normal, it is sufficient to consider only the case $\Lambda = I_p$. We begin with $\mathcal{O}$-SA. Letting $Q_{n1}, \ldots, Q_{np}$ be random orthogonal matrices uniform from the Haar distribution (and independent of all our other random variables), we have $x_{n,k} \sim Q_{nk} x_{n, k}$. Hence, \eqref{eq:bnkk} and \eqref{eq:bnkl} can be written as $b_{n, kk} = \sqrt{n} \{ (1/n) x_{n, k}' x_{n, k} - 1 \} + o_p(1)$ and $b_{n, k\ell} = (1/\sqrt{n}) x_{n, k}' (Q_{nk}' Q_{n\ell} + R_{nk}' R_{n\ell}) x_{n, \ell} + o_p(1)$
    where $R_{nk} := O_{nk} Q_{nk}$ is Haar-uniformly distributed orthogonal matrix and independent of $Q_{nk}$. We then decompose $x_{n,k} = \chi_{nk} u_{nk}$ where $\chi_{nk} \sim \chi_n$ and $u_{nk}$ is uniform on the unit sphere and let $\chi_{nk1}, \chi_{nk2}$ be independent $\chi_n$-random variables. Using these we write $b_{n, kk}$ and $b_{n, k\ell}$ as $ b_{n, kk} = \sqrt{n} \{ (1/n) (\chi_{nk} u_{nk})' (\chi_{nk} u_{nk}) - 1 \} + o_p(1) $ and as
   \begin{align*}
        b_{n, k\ell} &= \frac{\chi_{nk}\chi_{n\ell}}{\chi_{nk1}\chi_{n\ell 1}} \frac{1}{\sqrt{n}} (\chi_{nk1} Q_{nk} u_{nk})' (\chi_{n\ell1} Q_{n\ell} u_{n\ell})  \\
        &+ \frac{\chi_{nk}\chi_{n\ell}}{\chi_{nk2}\chi_{n\ell 2}} \frac{1}{\sqrt{n}} (\chi_{nk2} R_{nk} u_{nk})' (\chi_{n\ell2} R_{n\ell} u_{n\ell}) + o_p(1).
    \end{align*}
    Now, above $(\chi_{nk}\chi_{n\ell})/(\chi_{nk1}\chi_{n\ell 1}) \rightarrow_p 1$ and $(\chi_{nk}\chi_{n\ell})/(\chi_{nk2}\chi_{n\ell 2}) \rightarrow_p 1$. Moreover, $\chi_{nk} u_{nk}$, $\chi_{nk1} Q_{nk} u_{nk}$ and $\chi_{nk2} R_{nk} u_{nk}$ are independent of each other and all obey the $\mathcal{N}_n(0, I_n)$-distribution. Hence, by CLT and Slutsky's lemma, the desired joint limiting normal distribution for $b_{n, kk}$ and $b_{n, k\ell}$, $k, \ell = 1, \ldots, p$, $k \neq \ell$, is obtained in the case of $\mathcal{O}$-SA.

    For $\mathcal{J}$-SA, joint limiting normality follows from the multivariate CLT, as $J_{n1}, \ldots, J_{np}$ do not ``mix'' the observations and retain their i.i.d. nature. 

    What is thus left is $\mathcal{P}$-SA. In that case, we use the Cram{\'e}r-Wold device combined with the CLT for local dependency neighbourhoods \cite[Theorem 3.6]{ross2011fundamentals}. We demonstrate the proof below in the case $p = 2$, the general case following analogously (but with more cluttered notation). That is, the claim holds for $p = 2$ once we show that $ S_n := a_{11} \sqrt{n} ( \frac{1}{n} x' x - 1 ) + a_{12} \frac{1}{\sqrt{n}} x' (I_n + P) y + a_{22} \sqrt{n} ( \frac{1}{n} y' y - 1 ) $ admits a limiting normal distribution where $a_{11}, a_{12}, a_{22}$ are arbitrary constants, $P$ is a uniformly random $n \times n$ orthogonal matrix and $x, y \sim \mathcal{N}_n(0, I_n)$ are independent of each other and $P$. Our objective is to use \cite[Theorem 3.6]{ross2011fundamentals}, where we will take (using their notation) $X_i = a_{11} (x_i^2 - 1)/\sqrt{n} + a_{12} x_i (y_i + y_{\delta(i)})/\sqrt{n} + a_{22} (y_i^2 - 1)/\sqrt{n}$, where $\delta$ is the permutation mapping corresponding to the permutation matrix $P$, and $\sigma^2 = \mathrm{Var}(S_n) = 2 a_{11} + 2 a_{12} (1 + 1/n) + 2 a_{22}$.

    Inspection of the proof of their Theorem 3.6 reveals that the dependency neighbourhood $N_i$ can be taken to be random (as they are for us), as long as their maximal degree is almost surely some finite $D$ (as it is for us, since $x_i (y_i + y_{\delta(i)})$ depends on maximally two other terms), when specific modifications to the proof/result are done: (i) The first term on the RHS of the statement of Theorem 3.6 can be kept as such, but when deriving it, the sums of the form ``$\sum_{j \in N_i}$'' have to be replaced with ``$\sum_{j = 1}^n \mathbb{I}(j \in N_i)$'', since $N_i$ are random. (ii) In their formula (3.12), we decompose the term $\mathrm{Var} ( \sum_{i = 1}^n \sum_{j \in N_i} X_i X_j )$ as
    \begin{align*}
        & \mathrm{E} \left\{ \mathrm{Var} \left( \sum_{i = 1}^n \sum_{j \in N_i} X_i X_j \mid P \right) \right\} + \mathrm{Var} \left\{ \mathrm{E} \left( \sum_{i = 1}^n \sum_{j \in N_i} X_i X_j \mid P \right) \right\} \\
        \leq& 14 D^3 \sum_{i = 1}^n \mathrm{E}(X_i^4) + \mathrm{Var}\{ \mathrm{Var}(S_n \mid P) \},
    \end{align*}
    where the inequality uses the variance bound derived in the proof of \cite[Theorem 3.6]{ross2011fundamentals} (for our conditioned-upon dependency neighborhoods) and the second term uses the fact that $\mathrm{Var}(\sum_{i = 1}^n X_i \mid P) = \mathrm{E} ( \sum_{i = 1}^n \sum_{j \in N_i} X_i X_j \mid P )$.

    Now, as $D$ is fixed, as $\sigma^2$ approaches a non-zero constant when $n \rightarrow \infty$, and as the $X_i$ are identically distributed, the desired claim holds once we show that
    \begin{align}\label{eq:three_remaining_things}
        \mathrm{E} |X_1|^3 = o(1/n), \quad \mathrm{E} X_1^4 = o(1/n) \quad \mbox{and} \quad \mathrm{Var}\{ \mathrm{Var}(S_n \mid P) \} = o(1).
    \end{align}

    For the third claim in \eqref{eq:three_remaining_things}, we have $\mathrm{Var}(S_n \mid P) = 2 a_{11} + 2 a_{12} \{ 1 + (1/n) \mathrm{tr}(P) \} + 2 a_{22}$.
    Hence,
    \begin{align*}
        \mathrm{Var}\{ \mathrm{Var}(S_n \mid P) \} = \frac{4 a_{12}^2}{n^2} \mathrm{Var} \{ \mathrm{tr}(P) \} \leq \frac{4 a_{12}^2}{n^2} \mathrm{E} [ \{ \mathrm{tr}(P) \}^2 ] = \frac{8 a_{12}^2}{n^2},
    \end{align*}
    where we use Lemma \ref{lem:moments_permutation}, taking care of the third claim. Then, as $X_1$ equals $1/\sqrt{n}$ times a random variable with all moments finite, $ X_1 = (1/\sqrt{n}) \{ a_{11} (x_1^2 - 1) + a_{12} x_1 (y_1 + y_{\sigma(i)}) + a_{22} (y_1^2 - 1) \} $, the first two claims in \eqref{eq:three_remaining_things} also trivially hold. This conludes the proof in the case $p = 2$ and the general case is handled analogously.
    
\end{proof}

\section{Additional figures}\label{sec:appen_fig}

Figure \ref{fig:poi_met} shows the results of the utility simulation study in Section \ref{sec:finite} for Poisson-distributed data. The top (bottom) row of Figure \ref{fig:poi_met} is analogous to Figure \ref{fig:norm_met} (Figure \ref{fig:norm_viol}).

\begin{figure}[!h]
\includegraphics[width=\textwidth]{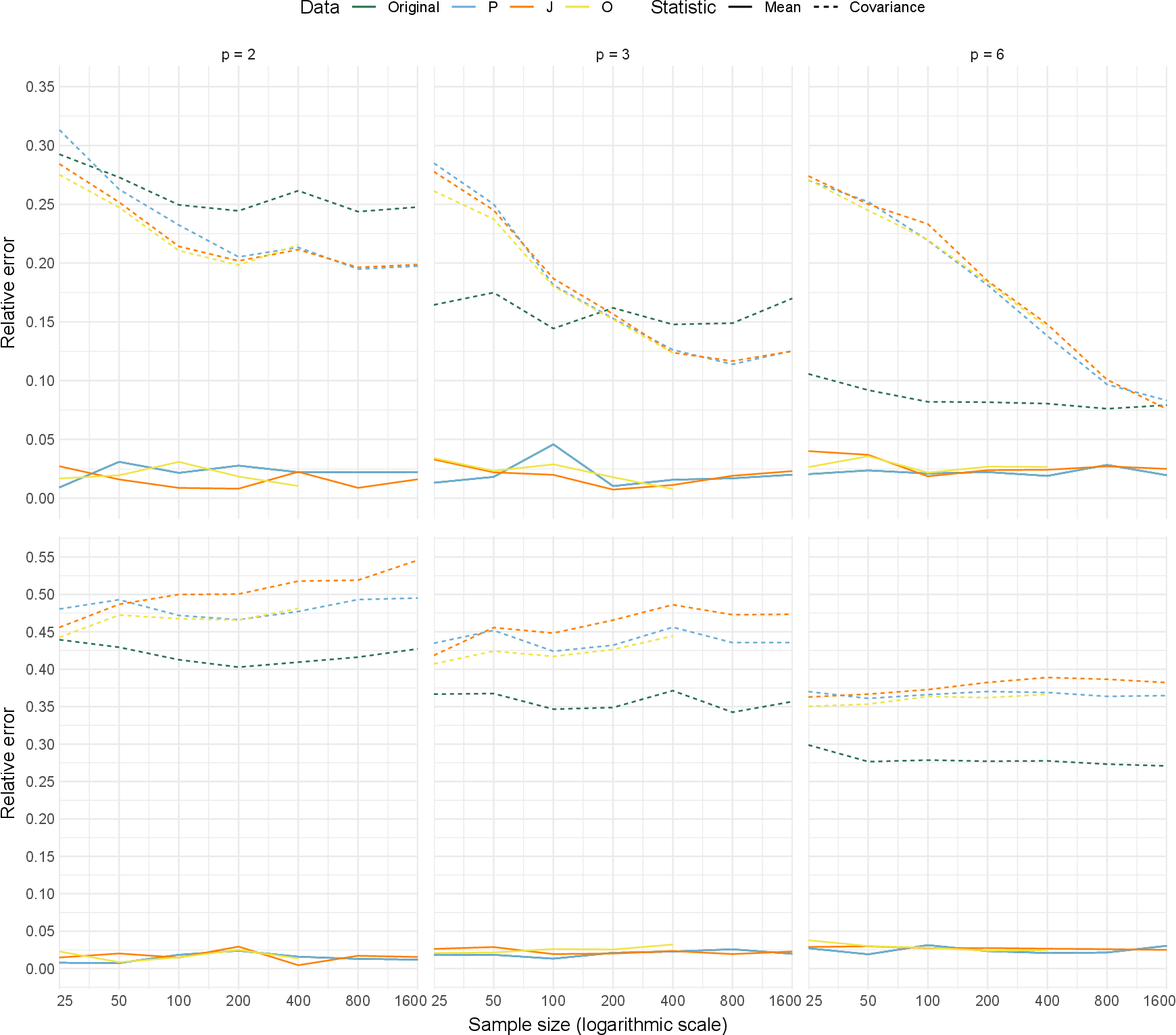}
\caption{Relative error of the empirical covariance matrices of sample mean (solid line) and sample covariance (dotted line) compared to their asymptotic covariance matrices for $\mathcal{P}$-SA (blue), $\mathcal{J}$-SA (orange), and $\mathcal{O}$-SA (yellow). In the top row, the original data (green/darkest) is sampled from a Poisson distribution meeting Assumption~\ref{assu:eigenvalues} while on the second row the assumption is violated. Sample size is on a logarithmic scale.}
\label{fig:poi_met}
\end{figure}

\bibliographystyle{splncs04}
\bibliography{PSD_2024}

\end{document}